\documentclass[12pt,a4paper]{article}
\usepackage[top=20mm, left=25mm, right=25mm, bottom=20mm]{geometry}
\usepackage{graphicx} 
\usepackage[colorlinks = true]{hyperref}
\usepackage[utf8]{inputenc}
\usepackage{xcolor}
\usepackage{amsmath}
\usepackage{amsfonts}
\usepackage{amsthm}
\newtheorem{thm}{Theorem}[section]

\newtheorem{theorem}{Theorem}[section]
\newtheorem{lemma}[theorem]{Lemma}

\theoremstyle{definition}

\newtheorem{conjecture}[theorem]{Conjecture}

\title{Strengthening Wilf's lower bound on clique number\thanks{This work is partially supported by the Department of Science and Technology (Government of India) under SERB Project SRG/2022/002219, DST INSPIRE program (grant number Inspire 16/2020), and project IITI/YFRCG/2023-24/03.}}

\author{
    Hareshkumar Jadav, 
    Sreekara Madyastha,\\
    Rahul Raut,
    Ranveer Singh\\
    \small Department of Computer Science and Engineering,\\
    \small Indian Institute of Technology Indore, India
}

\date{} 







\begin{document}

\maketitle
\begin{abstract}
    Given an integer $k$, deciding whether a graph has a clique of size $k$ is an NP-complete problem. Wilf's inequality provides a spectral bound for the clique number of simple graphs. Wilf's inequality is stated as follows: $\frac{n}{n - \lambda_{1}} \leq \omega$, where $\lambda_1$ is the largest eigenvalue of the adjacency matrix $A(G)$, $n$ is the number of vertices in $G$, and $\omega$ is the clique number of $G$. Strengthening this bound, Elphick and Wocjan proposed a conjecture in 2018, which is stated as follows: $\frac{n}{n - \sqrt{s^{+}}} \leq \omega$, where $s^+ = \sum_{\lambda_{i} > 0} \lambda_{i}^2$ and $\lambda_i$  are the eigenvalues of $A(G)$. In this paper, we have settled this conjecture for some classes of graphs, such as conference graphs, strongly regular graphs with $\lambda = \mu$ (i.e., $srg(n, d, \mu, \mu)$) and $n\geq 2d$, the line graph of $K_{n}$, the Cartesian product of strongly regular graphs, and Ramanujan graph with $n\geq 11d$.
    
\end{abstract}

Keywords: Clique number, Strongly regular graph, Line graph

Mathematics Subject Classification: 05C50, 05C69, 05C76, 05C48,

\section{Introduction}

Let $G = (V, E)$ be a simple graph with $n$ vertices and $m$ edges, where $V$ is the vertex set and $E$ is the edge set. Let $A(G)$ be the adjacency matrix of the graph $G$ and $\lambda_{1} \geq \lambda_{2} \geq \dots \geq \lambda_{n}$ be the eigenvalues of $A(G)$. The eigenvalues of $G$ are the eigenvalues of $A(G)$. In an undirected graph, a \textit{clique} is a subset of vertices such that every two distinct vertices in the subset are adjacent. The size of the maximum clique in the graph $G$ is called the \textit{clique number}, denoted as $\omega$ (or $\omega(G)$). Determining the clique number of a given graph is an NP-hard problem \cite{cormen}, meaning there is no polynomial-time algorithm to compute it (unless P $\neq$ NP). In 1986 \cite{wilf}, Wilf proved a spectral bound for the clique number. For graph $G$,
\[
     \frac{n}{n - \lambda_{1}}\leq \omega.
\]
Alternatively, it can be expressed as
\[
    \lambda_{1} \leq n \left(1 - \frac{1}{\omega}\right).
\]
To prove this, Wilf used the Motzkin-Straus theorem \cite{motzkin}, which is stated as follows. Let $S_n = \{ x \in \mathbb{R}^n : x \geq 0 \ \text{and} \ \textbf{1}^T x = 1 \}$ be a simplex in $\mathbb{R}^n$. Then 
\[
    \max_{x \in S_n} \sum_{(i, j) \in E} x_i x_j = \frac{1}{2} \left(1 - \frac{1}{\omega}\right).
\]

In 2018, Elphick and Wocjan \cite{elphick} proposed a conjecture:
\begin{conjecture} \label{conj2}
    Let $G$ be a graph and $s^+ = \sum_{\lambda_{i} > 0} \lambda_{i}^2$. Then 
\[
    \sqrt{s^{+}} \leq n \left(1 - \frac{1}{\omega} \right).
\]
Alternatively, it can be expressed as
\[
    \frac{n}{n - \sqrt{s^{+}}} \leq \omega.
\]
\end{conjecture}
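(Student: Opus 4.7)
The plan is to attack Conjecture \ref{conj2} not in full generality (which remains open) but family by family, following the list advertised in the abstract. In each case the verification will rest on two ingredients: an explicit description of the spectrum of $G$, so that $s^+$ can be computed or tightly bounded, and a matching lower bound on $\omega(G)$ coming from the structure of that family. Squaring, the aim is to establish
\[
s^+ \le n^2(1-1/\omega)^2,
\]
which is strictly stronger than Wilf's $\lambda_1^2 \le n^2(1-1/\omega)^2$ because $s^+ \ge \lambda_1^2$.

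I would first organise the spectral side. For a strongly regular graph $srg(n,d,\lambda,\mu)$ the non-principal eigenvalues $r > s$ and their multiplicities $f, g$ are determined by $(n,d,\lambda,\mu)$, so $s^+ = d^2 + fr^2 + g\max(s,0)^2$ is explicit. When $\lambda = \mu$ one has $r = -s = \sqrt{d-\mu}$ with equal multiplicities $(n-1)/2$, yielding a closed form that I would compare against $n^2(1-1/\omega)^2$ under the hypothesis $n \ge 2d$ using a structural lower bound on $\omega$. For conference graphs the spectrum is $\{2t, (\sqrt{t})^{2t}, (-\sqrt{t})^{2t}\}$, so $s^+$ is immediate; the standard clique lower bounds for Paley-type graphs are sharp enough. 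For $L(K_n)$ the spectrum is $\{2(n-2), (n-4)^{n-1}, (-2)^{n(n-3)/2}\}$ and $\omega(L(K_n)) = n-1$ for $n \ge 4$, so the inequality reduces to a direct algebraic comparison. For Cartesian products $G \square H$ I would use that the eigenvalues of $A(G \square H)$ are the pairwise sums of the eigenvalues of $G$ and $H$ and that $\omega(G \square H) = \max(\omega(G), \omega(H))$, reducing the statement to each factor. For a Ramanujan graph the defining bound $|\lambda_i| \le 2\sqrt{d-1}$ for $i \ge 2$ yields $s^+ \le d^2 + 4(n-1)(d-1)$, and I would compare this against $n^2(1-1/\omega)^2$ by invoking Wilf's bound $\omega \ge n/(n-d)$; the hypothesis $n \ge 11d$ is presumably exactly the threshold at which the comparison closes.

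The main obstacle I foresee is not computing $s^+$, which is routine once the spectrum is pinned down, but supplying a clique lower bound strong enough to absorb it. In particular for the Ramanujan case, where $\sqrt{s^+}$ is of order $\sqrt{nd}$, the inequality forces $\omega$ to be bounded away from $1$, and the constant $11$ should fall out of the arithmetic when $\omega \ge n/(n-d)$ is substituted. I would present each family as its own short proposition rather than attempt a unified argument, since the template is the same across families but the spectral identities and the justifications for the clique lower bound are genuinely family specific.
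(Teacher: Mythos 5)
This statement is a conjecture; the paper, like you, only verifies it family by family, and your list of families and most of your spectral computations match the paper's. The decisive device you are missing, however, is the one the paper uses to supply the clique lower bound: in the conference, $\lambda=\mu$ strongly regular, and Ramanujan cases it shows $\sqrt{s^+}\le 2n/3$, so the left-hand side is at most $3$, and then only $\omega\ge 3$ is needed, which follows from the presence of a triangle (e.g.\ $\lambda\ge 1$); the remaining triangle-free cases ($\omega=2$, e.g.\ the conference graph with $\mu=1$) are delegated to the already-proved triangle-free case of the conjecture. Your proposed substitutes for this step do not work. For conference graphs, ``standard clique lower bounds for Paley-type graphs'' is not a usable ingredient: conference graphs need not be Paley graphs, and no strong general clique lower bound is available -- all that is available, and all that is needed, is a triangle. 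For the Ramanujan case, Wilf's bound with $n\ge 11d$ gives only $\omega\ge\lceil n/(n-d)\rceil=2$, and then the required inequality $s^+\le n^2/4$ fails: with $s^+\le d^2+4(n-1)(d-1)$ the quadratic comparison closes only for roughly $n\ge(8+2\sqrt{17})d\approx 16.25\,d$, so the constant $11$ does not ``fall out of the arithmetic'' on your route. The paper instead needs only $s^+\le \tfrac{4}{9}n^2$ (which holds already for $n\gtrsim 9.25\,d$, hence for $n\ge 11d$) precisely because it may assume $\omega\ge 3$ after splitting off the known triangle-free case.

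The Cartesian product step is a second genuine gap. Knowing that the eigenvalues of the product are the pairwise sums and that the clique number of the product is the maximum of the factors' clique numbers does not ``reduce the statement to each factor'': the positive spectrum of the product contains sums $\lambda_i+\mu_j$ in which one summand is negative, so $s^+$ of the product is not controlled by the factors' $s^+$ in any obvious way, and no general closure of the conjecture under Cartesian products is known. The paper accordingly restricts to the product of a strongly regular $G$ with itself (and $n>7$), and the entire content of its proof is the explicit estimate $t^+\le n^2 s^+$ obtained by listing the product's eigenvalues built from the three eigenvalues $d>r>s$ of $G$; this is exactly the estimate your sketch would have to supply. Two smaller corrections: when $\lambda=\mu$ the restricted eigenvalues $\pm\sqrt{d-\mu}$ do \emph{not} have equal multiplicities $(n-1)/2$ (the multiplicities differ by $d/\sqrt{d-\mu}$; equality characterizes conference graphs), although using $(n-1)/2$ as an upper bound on the multiplicity of the positive one is harmless for bounding $s^+$. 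Your exact spectrum for $L(K_n)$ and the value $\omega(L(K_n))=n-1$ are correct, and that case (and the $\lambda=\mu$ case, once the triangle argument replaces the vague ``structural lower bound'') would go through essentially as in the paper.
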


Experimentally, they tested thousands of named graphs with up to 40 vertices but did not find any counterexamples. Additionally, they proved that the conjecture holds for triangle-free graphs, weakly perfect graphs, Kneser graphs, and almost all graphs (using the Erdős–Rényi random graph model with probability 0.5). This conjecture is also stated as the second problem in the survey on open problems in spectral graph theory in \cite{Liu}. 
In this paper, we have settled this conjecture for some classes of graphs, such as conference graphs, strongly regular graphs with $\lambda = \mu$ (i.e., $srg(n, d, \mu, \mu)$) and $n\geq 2d$, the line graph of $K_{n}$, the Cartesian product of strongly regular graphs, and Ramanujan graph with $n\geq 11d$.

The rest of the paper is organized as follows. In Section \ref{section_srg}, we prove that the conjecture holds for conference graphs and $srg(n, d, \lambda, \mu)$ with $\lambda = \mu$ and $n\geq2d$. In Section \ref{section_line}, we prove that the conjecture holds for the line graph of a complete graph $K_{n}$. In Section \ref{section_car}, we prove that if the graph $G$ is strongly regular and satisfies the conjecture, then the conjecture is true for the Cartesian product of $G$ with $G$, denoted as ($G\; \Box \;G$). 
In Section \ref{section_ramanujan}, we prove that the conjecture holds for Ramanujan graphs if $n\geq 11d$.

\section{Strongly regular graphs} \label{section_srg}

A \textit{strongly regular graph} is denoted by $srg(n, d, \lambda, \mu)$, where $n$ is the number of vertices, $d$ is the degree of each vertex, $\lambda$ is the number of common neighbors for adjacent vertices, and $\mu$ is the number of common neighbors for non-adjacent vertices. 
We use the standard notation $(d^1, r^f, s^g)$ to denote eigenvalues and their multiplicity, where $d$, $r$, and $s$ are eigenvalues and $1$, $f$, and $g$ denote their multiplicities, respectively. All three eigenvalues of the strongly regular graph are known, which are as follows:
\[
d \text{ with multiplicity } 1,
\]
\[
r = \frac{1}{2} \left[ (\lambda-\mu) + \sqrt{(\lambda-\mu)^2 + 4(d-\mu)} \right]
\]
with multiplicity
\[
f = \frac{1}{2}\left[ (n - 1) - \frac{2d + (n - 1)(\lambda - \mu)}{\sqrt{(\lambda - \mu)^2 + 4(d - \mu)}}\right],
\]
and
\[
s = \frac{1}{2} \left[ (\lambda-\mu) - \sqrt{(\lambda-\mu)^2 + 4(d-\mu)} \right]
\]
with multiplicity
\[
g = \frac{1}{2}\left[ (n - 1) + \frac{2d + (n - 1)(\lambda - \mu)}{\sqrt{(\lambda - \mu)^2 + 4(d - \mu)}}\right].
\]
We will use these formulas in this work. For more details on strongly regular graphs, refer to \cite{brouwer2022strongly,godsil}.

In this section, we first prove the conjecture for conference graphs \cite{brouwer}, which are strongly regular graphs with parameters \((4\mu + 1, 2\mu, \mu - 1, \mu)\) for some integer $\mu$.

\begin{thm}
    If a graph $G$ is conference graph, then \[\frac{n}{n - \sqrt{s^+}} \: \leq \: \omega(G).\]
\end{thm}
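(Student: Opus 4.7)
My plan is to write down $s^{+}$ explicitly for a conference graph and then split into two small cases according to the value of $\mu$.

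First I would specialise the eigenvalue formulas for $srg(n,d,\lambda,\mu)$ to the conference-graph parameters $(4\mu+1,\,2\mu,\,\mu-1,\,\mu)$. Since $\lambda-\mu=-1$ and $d-\mu=\mu$, the non-trivial eigenvalues become $r=(\sqrt{n}-1)/2$ and $s=-(\sqrt{n}+1)/2$, each with multiplicity $(n-1)/2 = 2\mu$, while the Perron eigenvalue is $d=2\mu$ with multiplicity $1$. Only $d$ and $r$ are positive, so
\[
s^{+}=d^{2}+f r^{2}=4\mu^{2}+\frac{n-1}{2}\cdot\frac{(\sqrt{n}-1)^{2}}{4}.
\]
Substituting $\mu=(n-1)/4$ and simplifying, I obtain the clean closed form
\[
s^{+}=\frac{(n-1)\bigl(3n-1-2\sqrt{n}\bigr)}{8}.
\]

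Next I would split on $\mu$. If $\mu=1$ then $G=C_{5}$ (the unique $srg(5,2,0,1)$), $\omega=2$, and plugging in $n=5$ gives $s^{+}=7-\sqrt{5}$, whence $n/(n-\sqrt{s^{+}}) = 5/(5-\sqrt{7-\sqrt{5}}) < 2$, so the conjecture holds. If $\mu\geq 2$, then $\lambda=\mu-1\geq 1$, so every edge lies in a triangle and hence $\omega(G)\geq 3$. It therefore suffices to prove $\frac{n}{n-\sqrt{s^{+}}}\leq 3$, i.e.\ $9s^{+}\leq 4n^{2}$. Using the closed form above and setting $t=\sqrt{n}$, this reduces to
\[
9(t-1)^{2}(t+1)(3t+1)\leq 32\,t^{4},
\]
after factoring $3t^{2}-2t-1=(3t+1)(t-1)$. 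Expanding, this is equivalent to $5t^{4}+18t^{3}+36t^{2}-18t-9\geq 0$, which is clear for $t\geq 3$ (i.e.\ $n\geq 9$), the range forced by $\mu\geq 2$.

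The main obstacle is mostly bookkeeping. The one non-obvious observation is that for conference graphs the purely combinatorial bound $\omega\geq 3$ (for $\mu\geq 2$) is already strong enough to beat the target $n/(n-\sqrt{s^{+}})$, rather than needing a sharper spectral lower bound on $\omega$. Once that is spotted, the remainder is a routine simplification of $s^{+}$ and a polynomial inequality in $\sqrt{n}$ that holds comfortably for every admissible $n$; the singleton $\mu=1$ case just needs to be handled by hand.
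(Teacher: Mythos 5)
Your proposal is correct and follows essentially the same route as the paper: compute $s^{+}$ from the conference-graph spectrum, use $\lambda=\mu-1\geq 1$ to get $\omega\geq 3$ when $\mu\geq 2$, and reduce to the inequality $9s^{+}\leq 4n^{2}$, treating the $\mu=1$ case separately. The only cosmetic differences are that you verify the polynomial inequality explicitly in $t=\sqrt{n}$ (the paper merely asserts the bound by $3$) and you settle $\mu=1$ by direct computation on $C_{5}$ rather than by citing the known triangle-free case.
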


\begin{proof}
A conference graph can be described by the parameters 
$
(4\mu +1 ,2\mu,\mu-1,\mu).
$
Therefore, the positive eigenvalues and their corresponding multiplicities are,
\[
2\mu \text{ with multiplicity 1,}
\]
and
\[
\frac{1}{2}\left(-1+\sqrt{1+4\mu}\right) \text{with multiplicity } 2\mu.
\]
Therefore,
\[
s^+ =4\mu^2 + \frac{\mu}{2}\left(\sqrt{4\mu+1}-1\right)^2
    = 6\mu^2 +\mu -\mu\sqrt{1+4\mu}.
\]
So we can write,
\[
    \frac{n}{n-\sqrt{s^+}} = \frac{4\mu+1}{4\mu +1 - \sqrt{6\mu^2 +\mu -\mu\sqrt{1+4\mu}}} \leq \frac{4\mu+1}{4\mu +1 - \sqrt{6\mu^2 +\mu -2\mu\sqrt{\mu}}}.
\]
For $\mu > 1$, it implies that $\lambda \geq 1$, therefore we have at least one triangle in the graph. Hence

\[
\frac{n}{n-\sqrt{s^+}} \leq \frac{4\mu+1}{4\mu +1 - \sqrt{6\mu^2 +\mu -2\mu\sqrt{\mu}}} \leq 3.
\]

If $\mu = 1$, then $\lambda = 0$, which means that two adjacent vertices do not share any common neighbors, resulting in a triangle-free graph. The conjecture has already been proven for triangle-free graphs. This completes the proof.

\end{proof}

Now, we will prove that the conjecture is true for strongly regular graphs when $\lambda = \mu$ and $n\geq 2d$.

\begin{thm}
    If a graph $G$ is $srg(n, d, \lambda, \mu)$ with $\lambda=\mu$ and $n\geq 2d$, then \[\frac{n}{n - \sqrt{s^+}} \: \leq \: \omega(G).\]
\end{thm}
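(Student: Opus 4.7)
My plan is to split into two cases according to whether $\mu = 0$ or $\mu \geq 1$, and reduce the second (main) case to showing $\omega(G) \geq 3$ and $\sqrt{s^+} \leq 2n/3$.

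First, if $\mu = 0$ then $\lambda = 0$ as well, so the graph is triangle-free, and the conjecture has already been established in this setting by Elphick and Wocjan. Otherwise $\mu \geq 1$ implies $\lambda \geq 1$, meaning every edge has a common neighbour, so there is at least one triangle and hence $\omega(G) \geq 3$. Consequently $n(1 - 1/\omega) \geq 2n/3$, and it suffices to prove $\sqrt{s^+} \leq 2n/3$.

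Next I would compute $s^+$ explicitly from the eigenvalue data. With $\lambda = \mu$, the positive eigenvalues are $d$ (multiplicity $1$) and $r = \sqrt{d-\mu}$ (multiplicity $f = \tfrac12[(n-1) - d/\sqrt{d-\mu}]$), so
\[
s^+ \;=\; d^2 + f(d-\mu) \;=\; d^2 + \frac{(n-1)(d-\mu)}{2} - \frac{d\sqrt{d-\mu}}{2}.
\]
Using the standard strongly regular graph identity $d(d-1) = (n-1)\mu$, one rewrites $(n-1)(d-\mu) = d(n-d)$ and obtains the compact expression
\[
s^+ \;=\; \frac{d\bigl(d + n - \sqrt{d-\mu}\bigr)}{2} \;\leq\; \frac{d(d+n)}{2}.
\]

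Finally, I would feed in the hypothesis $n \geq 2d$, i.e.\ $d \leq n/2$, to get $s^+ \leq \tfrac{1}{2} \cdot \tfrac{n}{2} \cdot \tfrac{3n}{2} = \tfrac{3n^2}{8}$, so $\sqrt{s^+} \leq \tfrac{\sqrt{6}}{4}\,n$. Since $\tfrac{\sqrt{6}}{4} < \tfrac{2}{3}$ (equivalently $54 < 64$), this yields $\sqrt{s^+} < \tfrac{2n}{3} \leq n(1 - 1/\omega)$, as desired. The only real obstacle I anticipate is the algebraic reduction of $s^+$ to $\tfrac{d(d+n-\sqrt{d-\mu})}{2}$, which requires carefully applying the SRG identity $d(d-1) = (n-1)\mu$ inside the multiplicity $f$; everything else is a short chain of inequalities driven by $n \geq 2d$ and the triangle giving $\omega \geq 3$.
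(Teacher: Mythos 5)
Your proof is correct and follows essentially the same route as the paper: compute $s^+$ from the eigenvalues $d$ and $r=\sqrt{d-\mu}$, use the identity $d(d-1)=(n-1)\mu$, bound $\sqrt{s^+}$ by $\tfrac{2n}{3}$ via $n\geq 2d$, and dispose of the $\omega=2$ case by the known triangle-free result. Your only added value is that you carry out explicitly (via the clean form $s^+=\tfrac{d(d+n-\sqrt{d-\mu})}{2}\leq\tfrac{3n^2}{8}$) the algebraic step the paper merely asserts, namely that $n\geq 2d$ forces $s^+\leq\tfrac{4n^2}{9}$.
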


\begin{proof} 
For a given $srg(n,d,\lambda, \mu ) $ with $\lambda=\mu$,
\begin{equation}
\label{lambdaIsmu}
    s^+ = d^2 + \frac{1}{2}(d-\mu)\left( (n - 1) - \frac{d}{\sqrt{d - \mu}}\right).
\end{equation}
Parameters of $srg(n, d, \lambda, \mu)$ have following relation: 
\[
(n - d - 1)\mu = d(d-\lambda - 1)
\]
and for $\lambda = \mu$,
\[
\mu = \frac{d(d-1)}{n - 1}.
\]
By substituting the value of $\mu$ in (\ref{lambdaIsmu}), we get 
\[
s^+ = d^2 + \frac{1}{2}\left(d-\frac{d(d-1)}{n - 1}\right)\left( (n - 1) - \sqrt{\frac{d(n - 1)}{n - d}}\right)
\]
and
\[
\text{if } n\geq 2d\text{, then }s^+ \leq \frac{4n^2}{9}.
\]
If $s^+ \leq \frac{4n^2}{9}$ then,
\[
\frac{n}{n - \sqrt{s^+}} \leq 3.
\]
When the clique number is 2, the graph is triangle-free, and the conjecture is already proven for triangle-free graphs. This completes the proof.
\end{proof}

By this, we can say that for strongly regular graphs like the Gewirtz graph, i.e., $srg(56,10,2,2)$, the conjecture is true.

\section{Line graphs} \label{section_line}

The \textit{line graph} of a graph $G$ is denoted by $L(G)$, with vertex set $E(G)$. An edge is drawn between two vertices in $L(G)$ if the corresponding edges in $G$ share a common vertex. In this section, we will prove that the conjecture is true for the line graph of a complete graph $K_{n}$.

\begin{thm}
    If a graph \( G \) is the line graph of complete graph \( K_n \), where \( n > 5 \), then
    \[
    \frac{\binom{n}{2}}{\binom{n}{2} - \sqrt{s^+}} \leq \omega(G).
    \]
\end{thm}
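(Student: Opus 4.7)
The plan is to reduce the statement to a single polynomial inequality in $n$ after recognising $L(K_n)$ as the triangular graph and reading its spectrum off of the formulas in Section~\ref{section_srg}. A short edge-counting exercise in $K_n$ shows that $L(K_n)$ is $srg(\binom{n}{2},\,2(n-2),\,n-2,\,4)$: two adjacent edges of $K_n$ have $n-2$ common edge-neighbours and two disjoint edges have exactly $4$. Substituting $\lambda-\mu = n-6$ into the formulas of Section~\ref{section_srg} one finds that $(\lambda-\mu)^2 + 4(d-\mu)$ collapses to $(n-2)^2$, so the three eigenvalues simplify to $2(n-2)$ with multiplicity $1$, $n-4$ with multiplicity $n-1$, and $-2$ with multiplicity $n(n-3)/2$.

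Next I would assemble $s^{+}$ and $\omega(L(K_n))$. For $n > 4$ both non-trivial positive eigenvalues $2(n-2)$ and $n-4$ contribute, so
\[
s^{+} = 4(n-2)^{2} + (n-1)(n-4)^{2},
\]
which expands to the cubic $n^{3}-5n^{2}+8n$. For the clique number I would invoke the standard fact that a clique of $L(K_n)$ corresponds to a pairwise-intersecting family of edges of $K_n$, which must be either a star centred at a single vertex (size up to $n-1$) or a triangle (size $3$); for $n \geq 4$ the star wins, giving $\omega(L(K_n)) = n - 1$.

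Finally, writing $N = \binom{n}{2}$, the target inequality $\frac{N}{N - \sqrt{s^{+}}} \leq n-1$ is equivalent to $s^{+} \leq N^{2}\cdot\frac{(n-2)^{2}}{(n-1)^{2}} = \frac{n^{2}(n-2)^{2}}{4}$. Plugging in the cubic expression for $s^{+}$ and clearing denominators reduces everything to the factorable polynomial inequality $n(n-4)(n^{2}-4n+8) \geq 0$, which is immediate from $n^{2}-4n+8 = (n-2)^{2}+4 > 0$ together with $n \geq 4$, so the hypothesis $n > 5$ is more than enough. I do not foresee a genuine obstacle; the only step that needs care is the algebraic manipulation that makes the polynomial difference factor so cleanly, and the sanity check that equality is achieved at $n=4$ (where $n-4$ drops out of $s^{+}$) confirms that the bound is essentially sharp on this family.
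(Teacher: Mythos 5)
Your proposal is correct and follows essentially the same route as the paper: identify $L(K_n)$ as $srg\left(\binom{n}{2},\,2(n-2),\,n-2,\,4\right)$, bound $\omega$ from below by the star clique of size $n-1$, and reduce the target inequality to a polynomial inequality in $n$. The only difference is that you use the exact multiplicity $n-1$ of the eigenvalue $n-4$ (and the exact value of $\omega$), whereas the paper works with the cruder bound $f<\frac{n(n-1)}{4}$ and only $\omega\geq n-1$; your sharper bookkeeping yields the factorization $n(n-4)(n^{2}-4n+8)\geq 0$ and the observation that the bound is tight at $n=4$, but it is a refinement of the same argument rather than a different one.
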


\begin{proof}
The line graph L($K_{n}$) is a strongly regular graph of the form
\[
L(K_{n}) = srg\left( \binom{n}{2},2(n-2),n-2,4\right).
\]

Every vertex of $K_n$ will have $n-1$ edges to other vertices. From the line graph definition, the vertices corresponding to those $n-1$ edges form a clique as they all share a common vertex. This implies that 
\[\omega(L(K_n)) \geq n-1 \]
and the conjecture is true if,
\begin{equation}
    \frac{\binom{n}{2}}{\binom{n}{2}-\sqrt{s^+}} \leq n-1 \leq \omega(L(K_n)).
\end{equation}

As $G$ is a strongly regular graph with parameter $\left( \binom{n}{2},2(n-2),n-2,4\right)$, the positive eigenvalues and corresponding multiplicities of $G$ are,
\[
2(n-2) \text{ with multiplicity } 1,
\]
and if $n>5$, then 
\[
r=n-4 \text{ with multiplicity } f<\frac{n(n-1)}{4}.
\]
From the above eigenvalues the conjecture is true if
\[
r^{2}f \leq \left(\frac{n^2}{4} - 4 \right) (n-2)^2.
\]
We know the value of $r$ and $f$, therefore
\[
r^2f < (n-4)^2\left(\frac{n(n-1)}{4}\right).
\]

Now we aim to prove,
$$(n-4)^2\left(\frac{n(n-1)}{4}\right) \leq \left(\frac{n^2}{4} - 4 \right) (n-2)^2. $$
If $n>4$, then
\begin{align*}
    n(n-4)(n-1) &\leq (n+4)(n-2)^2 \\
    n^3-5n^2+4n &\leq n^3-12n+16 \\
    0 &\leq 5n^2-16n+16.
\end{align*}

Here $5n^2-16n+16$ is always positive for $n>0$.
Hence the conjecture is true for line graphs of complete graph.
\end{proof}





\section{Cartesian product}\label{section_car}
The Cartesian product $G\; \Box \;H$ of graphs $G$ and $H$ is a graph with vertex set $V(G) \times V(H)$, where two vertices $(u, v)$ and $(u', v')$ are adjacent if and only if either (i) $u = u'$ and $(v, v') \in E(H)$, or (ii) $v = v'$ and $(u, u') \in E(G)$. One nice property of Cartesian product is: if $\lambda$ and $\lambda'$ are eigenvalues of $G$ and $H$, respectively, then $\lambda + \lambda'$ is an eigenvalue of $G\; \Box \;H$. In this section, we first present a result on the clique number of $G$ and $G \; \Box \; G$, and then use this to prove the conjecture.

\begin{lemma}
Let \( G \) be any undirected graph and \( G \, \Box \, G \) be the Cartesian product of $G$ with $G$, then \( \omega(G) = \omega( G \, \Box \, G ) \).
\end{lemma}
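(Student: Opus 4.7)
The plan is to prove the two inequalities $\omega(G) \leq \omega(G \,\Box\, G)$ and $\omega(G \,\Box\, G) \leq \omega(G)$ separately.

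For the lower bound $\omega(G) \leq \omega(G \,\Box\, G)$, I would take a maximum clique $C \subseteq V(G)$ and fix any vertex $v_0 \in V(G)$. The set $\{(u, v_0) : u \in C\}$ is a clique of the same size in $G \,\Box\, G$, because any two such vertices share the second coordinate $v_0$ and have adjacent first coordinates in $G$, which is exactly the Cartesian product adjacency rule. This gives $\omega(G\,\Box\,G) \geq |C| = \omega(G)$.

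For the upper bound $\omega(G \,\Box\, G) \leq \omega(G)$, the plan is to show that any clique $K$ in $G \,\Box\, G$ of size at least $2$ must lie entirely inside a single ``row'' $\{u\}\times V(G)$ or a single ``column'' $V(G)\times\{v\}$. The main observation is the following structural claim: if $(u_1,v_1),(u_2,v_2)\in K$ are distinct and share their first coordinate (so $u_1=u_2$ and necessarily $v_1\neq v_2$), then every other $(u_3,v_3)\in K$ must also have $u_3=u_1$. Indeed, the adjacency of $(u_3,v_3)$ with $(u_1,v_1)$ forces either $u_3=u_1$ or $v_3=v_1$; in the latter case, since $v_3=v_1\neq v_2$, the adjacency of $(u_3,v_3)$ with $(u_2,v_2)$ forces $u_3=u_2=u_1$. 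A symmetric argument handles the case when two vertices of $K$ share their second coordinate. Since any two vertices of $K$ are adjacent in $G\,\Box\,G$ and hence must agree in at least one coordinate, $K$ is entirely contained in some row or column. The corresponding set of varying coordinates then forms a clique in $G$, so $|K|\leq \omega(G)$.

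Combining the two inequalities yields $\omega(G)=\omega(G\,\Box\,G)$. I do not expect a serious obstacle: the whole argument is a short case analysis based directly on the definition of the Cartesian product edges. The only subtle point is making precise the dichotomy ``all vertices of $K$ share one fixed first coordinate, or all share one fixed second coordinate,'' which I would present explicitly as the structural claim above before concluding.
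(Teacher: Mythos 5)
Your proof is correct and follows essentially the same route as the paper: embed a clique of $G$ into a fixed row for the lower bound, and for the upper bound argue that a clique of $G\,\Box\,G$ must lie in a single row or column because adjacent vertices agree in a coordinate. Your explicit structural claim (two clique vertices sharing a first coordinate force every other clique vertex to share it) is in fact a cleaner justification of the step the paper summarizes as ``all vertices in $C'$ must share at least one coordinate, reducing to Case 1 or 2.''
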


\begin{proof}
We prove \( \omega(G) = \omega(G \, \Box \, G) \) by showing both inequalities. First, let \( C = \{v_1, v_2, \ldots, v_\omega\} \) be a maximum clique in \( G \). For any fixed \( u \in V(G) \), the set 
\[
C' = \{(u, v_1), (u, v_2), \ldots, (u, v_\omega)\}
\]
forms a clique in \( G \, \Box \, G \). Hence, \( \omega(G \, \Box \, G) \geq \omega(G) \).

Now for the other direction, assume for contradiction that \( \omega(G \, \Box \, G) > \omega(G) \). Let \( C' = \{(u_1, v_1), \ldots, (u_{\omega+1}, v_{\omega+1})\} \) be a clique in \( G \, \Box \, G \). We analyze three cases. Case 1: All \( u_i \) are identical. Then \( \{v_1, \ldots, v_{\omega+1}\} \) forms a clique in \( G \), contradicting \( \omega(G) = \omega \). Case 2: All \( v_i \) are identical. Then \( \{u_1, \ldots, u_{\omega+1}\} \) forms a clique in \( G \), again a contradiction. Case 3: There exist \( (u_i, v_i) \), \( (u_j, v_j) \) in \( C' \) with \( u_i \neq u_j \) and \( v_i \neq v_j \). These two vertices cannot be adjacent in \( G \, \Box \, G \), violating the clique property. Hence, all vertices in \( C' \) must share at least one coordinate, reducing to Case 1 or 2. Since all cases lead to contradictions, \( \omega(G \, \Box \, G) \leq \omega(G) \). Combining both inequalities, we conclude \( \omega(G \, \Box \, G) = \omega(G) \).
\end{proof}

\begin{thm}
    Let graph $G$ be a strongly regular graph with parameters $(n, d, \lambda, \mu)$. If $G$ satisfies the conjecture and $n>7$, then $G \; \Box \; G$ also satisfies the conjecture.
\end{thm}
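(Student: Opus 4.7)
The plan is to reduce the conjecture for $G \, \Box \, G$ directly to the assumed conjecture for $G$, combining the preceding lemma $\omega(G \, \Box \, G) = \omega(G)$ with a coarse trace-level bound on $s^+(G \, \Box \, G)$.

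Since $G \, \Box \, G$ has $n^2$ vertices and, by the lemma, the same clique number as $G$, the conjecture for $G \, \Box \, G$ is equivalent to
\[
\sqrt{s^+(G \, \Box \, G)} \;\leq\; n^2\Bigl(1 - \tfrac{1}{\omega(G)}\Bigr).
\]
By hypothesis, $\sqrt{s^+(G)} \leq n\bigl(1 - 1/\omega(G)\bigr)$. So it suffices to prove the multiplicative estimate $s^+(G \, \Box \, G) \leq n^2 \, s^+(G)$, after which one takes square roots and multiplies through by $n$.

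To get this estimate I would use two elementary facts. First, because $G$ is $d$-regular on $n$ vertices, $G \, \Box \, G$ is $2d$-regular on $n^2$ vertices, so
\[
s^+(G \, \Box \, G) \;\leq\; \sum_i \lambda_i(G \, \Box \, G)^2 \;=\; \mathrm{tr}\!\bigl(A(G \, \Box \, G)^2\bigr) \;=\; 2\,|E(G \, \Box \, G)| \;=\; 2\,d\,n^2.
\]
Second, $d$ is itself a positive eigenvalue of $G$, so $s^+(G) \geq d^2$. For any proper strongly regular graph we have $d \geq 2$ (the minimal example being $C_5$), which is comfortably guaranteed by $n > 7$. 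Hence $s^+(G) \geq d^2 \geq 2d$, and multiplying by $n^2$ gives
\[
s^+(G \, \Box \, G) \;\leq\; 2\,d\,n^2 \;\leq\; n^2\,s^+(G).
\]
Chaining the inequalities then yields $\sqrt{s^+(G \, \Box \, G)} \leq n\sqrt{s^+(G)} \leq n \cdot n(1 - 1/\omega(G)) = n^2(1 - 1/\omega(G \, \Box \, G))$, which is precisely the conjecture for $G \, \Box \, G$.

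The main obstacle to anticipate, should a sharper computation be preferred, is the exact formula for $s^+(G \, \Box \, G)$: its eigenvalues are the pairwise sums $\lambda_i(G) + \lambda_j(G)$, and whether cross-terms such as $d+s$ or $r+s$ contribute to $s^+$ depends on the parameters $(n,d,\lambda,\mu)$. Expanding $s^+(G \, \Box \, G)$ in terms of $(d, r, s)$ with multiplicities $(1, f, g)$ and simplifying via the identity $d + f r + g s = 0$ would likely be the route by which the precise bound $n > 7$ arises naturally; my plan sidesteps this case analysis by using the slacker estimate $s^+(G \, \Box \, G) \leq 2 d n^2$, which is already enough once $d \geq 2$.
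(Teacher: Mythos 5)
Your proof is correct, but it takes a genuinely different and more elementary route than the paper. The paper expands $t^+ = s^+(G\,\Box\,G)$ explicitly over the strongly regular spectrum $(d^1, r^f, s^g)$, namely $t^+ \leq 4d^2 + 2f(d+r)^2 + 2g(d+s)^2 + 2fg(r+s)^2 + 4f^2r^2$, and then uses $1+f+g = n$ together with the hypothesis $n>7$ to absorb the multiplicities and reach $t^+ \leq n^2 s^+(G)$, before invoking the clique lemma and the conjecture for $G$. You reach the very same pivotal inequality $s^+(G\,\Box\,G) \leq n^2 s^+(G)$ by a trace argument: $s^+(G\,\Box\,G) \leq \operatorname{tr}\bigl(A(G\,\Box\,G)^2\bigr) = 2dn^2$ (since $G\,\Box\,G$ is $2d$-regular on $n^2$ vertices), while $n^2 s^+(G) \geq n^2 d^2 \geq 2dn^2$ once $d \geq 2$; the rest of the chaining is identical. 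What your approach buys is generality and brevity: it never uses strong regularity beyond regularity, nor the hypothesis $n>7$, so it actually proves the statement for every $d$-regular graph with $d \geq 2$ satisfying the conjecture. What the paper's computation buys is, in principle, a sharper handle on $t^+$, though here it needs the multiplicity casework and the $n>7$ assumption that your argument avoids. One small correction: $d\geq 2$ is not a consequence of $n>7$; it follows from $G$ being a (connected, non-trivial) strongly regular graph --- if one admitted the degenerate disconnected case $G = mK_2$ (where $d=1$), the step $d^2 \geq 2d$ would fail even for large $n$, so you should attribute $d \geq 2$ to the structure of $G$ rather than to the vertex count.
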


\begin{proof}

Let $G'=G\; \Box \;G$ and let $d>r>s$ are the eigenvalue of $G$ with multiplicity 1, $f$ and $g$, respectively. let $s^+$ and $t^+$ be the sums of squares of all positive eigenvalues of $G$ and $G'$, respectively. As $s<0$, therefore
\[
s^+=d^2+fr^2
\]
\[
t^+ = 4d^2  +  2f(d+r)^2 + 2g(d+s)^2 + 2fg(r+s)^2 + 4f^2r^2. 
\]
As
\[
s \leq r < d, \:r+s<r
\] 
Therefore
\begin{align*}
    t^+ &\leq 4d^2  +  2f(2d)^2 + 2g(2d)^2 + 2fgr^2 + 4f^2r^2\\
    t^+ &\leq 4d^2  +  2f(2d)^2 + 2g(2d)^2 + 4fgr^2 + 4f^2r^2\\
    t^+ &\leq 4(1+2f+2g)d^2 + 4f(g+f) r^2.
\end{align*}

Now, if $n>7$ then 
\[
4(1+2f+2g)<4(2n)\leq n^2 
\]
and, since $n>7$  
\[
4f(g+f)r^2= 4f(n-1)r^2 < fn^2r^2.
\]
From above we can write 
\begin{align*}
    t^+ &\leq n^2d^2+fn^2r^2\\
    t^+ &\leq n^2(d^2+fr^2)\\
    t^+ &\leq n^2s^+.
\end{align*}
Therefore,
\[
\frac{n^2}{n^2 - \sqrt{t^+}} \leq \frac{n}{n - \sqrt{s^+}} \leq \omega(G) = \omega(G').
\]
Hence proved.
\end{proof}

\section{Ramanujan graph}\label{section_ramanujan}

The Ramanujan graph is the best expander graph, where an expander graph is defined as a $d$-regular graph that is sparse yet well-connected. The second-largest eigenvalue $\lambda_2$ of $A(G)$, the adjacency matrix of graph $G$, is pivotal in analyzing the quality of an expander graph. A smaller $\lambda_2$ indicates a better expander graph. An expander graph is called a Ramanujan graph if $\lambda_2 \leq 2\sqrt{d-1}$. For further reading on Ramanujan graphs, refer to \cite{hoory}.

\begin{theorem}
For any $d$-regular Ramanujan graph $G$, the conjecture
\[
\frac{n}{n-\sqrt{s^+}} \leq \omega(G)
\]
holds true if $n \geq 11d$.
\end{theorem}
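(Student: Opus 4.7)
The plan is to split on whether $G$ contains a triangle. If $G$ is triangle-free, the conjecture is already proven by Elphick and Wocjan, so that case is done for free. I would therefore assume $G$ has at least one triangle, which gives $\omega(G)\geq 3$. Under this assumption the conjecture collapses to the much more tractable statement
\[
\sqrt{s^+}\;\leq\;\frac{2n}{3},\qquad\text{i.e.,}\qquad s^+\leq \frac{4n^2}{9},
\]
so the whole proof reduces to getting a good upper bound on $s^+$.

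The next step is to cash in the Ramanujan hypothesis. Since $G$ is a connected $d$-regular graph, the spectral radius is $\lambda_1=d$, contributing $d^2$ to $s^+$. Every other eigenvalue satisfies $\lambda_i\leq \lambda_2\leq 2\sqrt{d-1}$ by the Ramanujan condition, and since there are at most $n-1$ positive non-principal eigenvalues, I get the crude but sufficient bound
\[
s^+\;\leq\; d^2 + 4(d-1)(n-1).
\]

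It remains to verify the purely numerical inequality $d^2+4(d-1)(n-1)\leq \tfrac{4n^2}{9}$ for all $n\geq 11d$. Treating this as a quadratic in $n$ with leading coefficient $\tfrac{4}{9}>0$, I would check that it holds at the boundary $n=11d$ (where it reduces to a clean polynomial in $d$ that is manifestly nonnegative for $d\geq 1$) and then note that the derivative in $n$ of the gap $\tfrac{4n^2}{9}-4(d-1)(n-1)-d^2$ is $\tfrac{8n}{9}-4(d-1)$, which is positive throughout $n\geq 11d$; so the gap is increasing on this interval and monotonicity finishes the argument. Combining this with $\omega\geq 3$ yields $\tfrac{n}{n-\sqrt{s^+}}\leq 3\leq \omega(G)$.

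I do not expect a serious obstacle: the triangle-free case is handled by a known result, and the triangle case is reduced to an easy quadratic inequality via the Ramanujan bound. The only delicate point is choosing the threshold on $n$ so that the slack from the very lossy estimate $s^+\leq d^2+4(d-1)(n-1)$ (which assumes every non-principal eigenvalue is positive and saturates the Ramanujan bound) is absorbed; the constant $11d$ is comfortably past the crossover, so no subtler eigenvalue accounting (e.g.\ using $\sum_i\lambda_i=0$ to reduce the effective count of positive eigenvalues) is needed.
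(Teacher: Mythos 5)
Your proposal is correct and follows essentially the same route as the paper: the Ramanujan condition gives $s^+ \leq d^2 + 4(d-1)(n-1)$, and the problem reduces to the quadratic inequality $d^2 + 4(d-1)(n-1) \leq \tfrac{4}{9}n^2$ for $n \geq 11d$, which the paper settles by bounding the larger root while you check the boundary $n=11d$ plus monotonicity. Your explicit split into the triangle-free case and the case $\omega(G)\geq 3$ is a welcome addition, since the paper's proof of this theorem leaves that reduction (why $s^+ \leq \tfrac{4}{9}n^2$ suffices) implicit.
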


\begin{proof}
Given that the second largest eigenvalue of a $d$-regular Ramanujan graph is $\lambda_2 \leq 2\sqrt{d-1}$, it follows that
\[
s^+ \leq d^2 + (n-1)4(d-1).
\]
We require that
\[
s^+ \leq \frac{4}{9}n^2.
\]
The conjecture holds if
\[
d^2 + (n-1)4(d-1) \leq \frac{4}{9}n^2.
\]
This can be rewritten as the quadratic inequality:
\begin{equation}\label{ramanujan}
    0 \leq \frac{4}{9}n^2 - 4(d-1)n - (d - 2)^2.
\end{equation}

Treating this as a quadratic equation in terms of $n$, we find its roots:
\[
n = \frac{9(d-1) \pm 3\sqrt{9(d-1)^2 + (d - 2)^2}}{2}.
\]
Thus, if
\[
n \geq \frac{9(d-1) + 3\sqrt{9(d-1)^2 + (d - 2)^2}}{2},
\]
then inequality (\ref{ramanujan}) holds. Simplifying further, we see that
\[
n \geq \frac{9d + 3\sqrt{9d^2 + d^2}}{2}.
\]
Therefore, if $n \geq 11d$, then the conjecture is true.
\end{proof}

\section{Conclusion}
The conjecture proposes a stronger spectral bound for the clique number. In this paper, we have settled this conjecture for some different classes of graphs, specially strongly regular graphs, but it remains open for general graphs.

\subsection*{Acknowledgments}
The authors would like to thank Sivaramakrishnan Sivasubramanian and Sudipta Mallik for their valuable suggestions on this paper. This work was partially supported by the Department of Science and Technology (Government of India) under the SERB Project (project number SRG/2022/002219), DST INSPIRE program (grant number Inspire 16/2020), and project IITI/YFRCG/2023-24/03.


\end{document}